\def\BibTeX{{\rm B\kern-.05em{\sc i\kern-.025em b}\kern-.08em
    T\kern-.1667em\lower.7ex\hbox{E}\kern-.125emX}}
\newcommand{\F}{\mathbb{F}}
\newcommand{\Fq}{\mathbb{F}_{q}}
\newcommand{\Fqm}{\mathbb{F}_{q^m}}
\newcommand{\bb}{\mathbf{b}}
\newcommand{\bc}{\mathbf{c}}
\newcommand{\bs}{\mathbf{s}}
\newcommand{\be}{\mathbf{e}}
\newcommand{\bx}{\mathbf{x}}
\newcommand{\by}{\mathbf{y}}
\newcommand{\bu}{\mathbf{u}}
\newcommand{\bv}{\mathbf{v}}
\newcommand{\bzero}{\mathbf{0}}
\newcommand{\balpha}{\bm{\alpha}}
\newcommand{\bbeta}{\bm{\beta}}
\newcommand{\dr}{\mathrm{d_R}}
\newcommand{\calA}{\mathcal{A}}
\newcommand{\calB}{\mathcal{E}}
\newcommand{\calC}{\mathcal{C}}
\newcommand{\calE}{\mathcal{E}}
\newcommand{\calR}{\mathcal{R}}
\newcommand{\calS}{\mathcal{S}}
\newcommand{\calT}{\mathcal{T}}
\DeclareMathOperator{\Mul}{Mul}
\newcommand{\Enum}[3]{#1_{#2}, \ldots, #1_{#3}}
\newcommand{\Span}[2]{\langle #2 \rangle_{#1}}
\title{Two new algorithms for error support recovery of  low rank parity check codes
} 
\begin{document}
 \author{Ermes Franch\inst{1}, Chunlei Li\inst{1}}
 \institute{\inst{1}University of Bergen, Norway \\
 \email{\{ermes.franch, chunlei.li\}@uib.no} \thanks{This work was supported by the Research Council of Norway under Grant No.~311646/O70}}

\maketitle

\begin{abstract} 
   Due to their weak algebraic structure, low rank parity check (LRPC) codes have been employed in several post-quantum cryptographic schemes.
  In this paper we propose new improved decoding algorithms for $[n,k]_{q^m}$ LRPC codes of dual rank weight $d$.
  The proposed algorithms
  can
  efficiently decode LRPC codes
  with the parameters satisfying $n-k=rd-c$, where
  $r$ is the dimension of the error support and $c\leq d-2$.
  They outperform the original decoding algorithm of LRPC codes when $d>2$ and allow for decoding LRPC codes with a higher code rate and smaller values $m$.
\end{abstract}

\section{Introduction}

Rank-metric codes, which are embedded in a rank metric space,
have applications in network coding \cite{KoetterKschischang}, space-time codes \cite{Gabidulin-SpaceTimeCodes}, distributed storage \cite{RankBook1}, and cryptography \cite{RankBook2, GPT, gaborit2013, RankSign2014, gaborit2017, GaboritSurvey, DuranteSiciliano, ROLLO}. 
Rank-based cryptography relies on the difficulty of the rank syndrome decoding (RSD) problem.
So far the best-known method of solving the RSD problem has an exponential complexity which is quadratic in the parameter size \cite{GaboritZemor2016, GaboritRuattaSchrek2016}.
This nice feature allows for smaller sizes of keys in rank-based cryptosystems to achieve the same level of security provided by those cryptosystems based on Hamming-metric codes. 
Existing rank-based cryptographic schemes mainly used two types of rank-metric codes: Gabidulin codes \cite{Gabidulin1985} and low rank parity-check (LRPC) codes \cite{gaborit2013} and their variants, for which efficient decodings have been extensively studied \cite{Loidreau:2006aa,wachter2013fast,Kadir-li20}.
Due to the strong algebraic structure of Gabidulin codes, the GPT cryptosystem and its variants are subject to the structural attack by Overbeck \cite{Overbeck2008}.
LRPC codes can be seen as the equivalent of LDPC codes in rank metric and have a very weak algebraic structure.
These codes could be masked more easily in cryptosystems. Consequently, different schemes based on LRPC codes were proposed in recent years, RankSign \cite{RankSign2014}, ROLLO \cite{ROLLO}, Durandal \cite{DuranteSiciliano}, etc. 
On the other hand, without a certain algebraic structure, LRPC codes can only be decoded in a probabilistic manner. 
The original decoding algorithm of LPRC codes in \cite{gaborit2013} works only for the cases where the support of the syndrome is exactly the product space of the parity-check support and the error support.
When decoding LRPC codes, the error support recovery step contributes a dominating factor to decoding failures. 
In the extended paper \cite{LRPC-2019-TIT} on LRPC codes and their cryptographic applications, the authors further considered the cases where the syndrome support has a dimension $rd-c$ with $c<r$, where $r$ is the dimension of the error support, $d$ is the dual rank weight of the LRPC codes and $rd$ is the dimension of the product space between the two supports.
By applying two expansion functions on the syndrome support, they proposed new decoders that can correct errors with higher rank weights and decrease the decoding failure rate. 

In this paper we consider an alternative approach to decoding LRPC codes for the cases where the syndrome support has a dimension $rd-c$ with $c<d$. The proposed decoders rely on a crucial observation that employs all the elements (instead of only the basis elements) in the parity-check support, which enables us to significantly loosen the restriction on $m$ as required in \cite{LRPC-2019-TIT}.  The paper is organized as follows: Section \ref{Sec:pre} introduces notations, basics on rank metric codes, the problems of rank syndrome decoding and error support recovery.  Section \ref{Sec:LRPC} recalls the LRPC codes and their decoding approach. In Section \ref{Sec:decoder} we start with some theoretical analysis and then proposed two new algorithms for LRPC codes where
the syndrome support has dimension $rd-c$ with $c\leq d-2$; and Section \ref{Sec:dis} 
discusses the decoding failure rate of the proposed algorithms and their connections to the improved decoding algorithms in \cite{LRPC-2019-TIT}.

\section{Preliminaries} \label{Sec:pre}
We denote by $\Fqm$ the finite field of $q^m$ elements where $q$ is a prime power.
Vectors will be indicated by bold lower-case letters, and the $i$-th component of a vector will be indicated by the same letter in normal font, for example, $\bv = (\Enum{v}{1}{n})$. The notation $\Fq^n$ denotes the vector space of all the vectors of length $n$ over $\Fq$. 
A matrix $M$ will be indicated by upper-case letter, and the $(i,j)$-th entry of $M$ will be indicated by $m_{i,j}$.
The notation $\Fq^{m \times n}$ indicates all the $m\times n$ matrices over $\Fq$. The notation $[n]$ indicates the interval $\{1,\ldots, n\} \subset \mathbb{N}$.
Given a set $S \subseteq \Fqm$ and an element $a \in \Fqm$ the notation $Sa$ corresponds to the set $\{sa \mid s \in S\}$.

The field $\Fqm$ can be regarded as a vector space of dimension $m$ over the field $\Fq$.
We will denote $\Fq$-linear subspaces $\calS$ of $\Fqm$ by upper case calligraphic letters.
Given a set $S \subseteq \Fqm$, we denote by $\Span{\Fq}{S}$ the $\Fq$-linear subspace generated by the elements of this space.
For a given vector $\bv \in \Fqm^n,$ we denote as $\Span{\Fq}{\bv} = \Span{\Fq}{\Enum{v}{1}{n}}$ the subspace generated by the components of $\bv$, this will be called the support of $\bv$.
Similarly, for a matrix $M \in \Fqm^{k \times n}$, we will denote the subspace generated by all its entries as $\Span{\Fq}{M} = \Span{\Fq}{m_{i,j} \mid (i,j) \in [k]\times[n]}$.
The notations $\calA^n$ and $\calA^{m \times n}$ stand for the set of all the vectors of length $n$ having support $\calA$ and the set of all the $m \times n$ matrices with support $\calA$, respectively.

Using the notion of support, we can define the rank distance over $\Fqm^n$.
Consider $\bu,\bv \in \Fqm^n$, the \textbf{rank-weight} of $\bu$ is given by $\mathrm{w_R}(\bu) = \dim(\Span{\Fq}{\bu}),$ and 
the \textbf{rank distance} between two vectors is defined as
$
    \dr(\bu,\bv) = \mathrm{w_R}(\bu -\bv).
$
It can be proved that the function $\dr$ is a distance in the mathematical sense.


\begin{definition}
    A \textbf{rank metric code} is a subset $\calC \subseteq \Fqm^{n}$ equipped with the rank distance.
    The \textbf{minimum rank distance} of the code $\calC$ is given by the minimum distance between any two different elements of the code, i.e.,
    $
        \dr(\calC) = \min(\{\dr(\bu,\bv) \mid \bu\neq \bv \in \calC \}).
    $ If $\calC \subseteq \Fqm^n$ is an $\Fqm$-linear subspace of $\Fqm$ we call $\calC$ an \textbf{$\Fqm$-linear rank metric code}.
\end{definition}

\smallskip

An $\Fqm$-linear rank metric code can be defined by the use of a generator matrix or the use of a parity-check matrix.

\begin{definition}
    Let $\calC \subseteq \Fqm$ be a vector subspace of $\Fqm^n$ of dimension $k$.
    A \textbf{parity check matrix} of $\calC$ is a matrix $H \in \Fqm^{(n-k) \times n}$ of rank $n-k$ such that all the elements $\bc \in \calC$ satisfy $\bc H^\intercal = \bzero$. 
    For a generic $\bv \in \Fqm^n,$ we will have that $\bv H^\intercal = \bs \in \Fqm^{n-k}$, which is known as the \textbf{syndrome} of $\bv$. 
\end{definition}

One interesting problem in coding theory is the syndrome decoding problem. Below we recall this problem 
in the context of rank metric codes and a closely-related problem.

\begin{definition}
Given $H \in \Fqm^{n-k \times n}$ a parity check matrix of a $\Fqm$-linear rank metric code $\calC$, a syndrome $\by H^{\intercal} =  \bs \in \Fqm^{n-k}$ and a small integer $r$. 
The \textbf{Rank Syndrome Decoding (RSD) problem} consists in finding $\be \in \Fqm^n$ such that
    $$
        \be H^\intercal = \bs, \quad \mathrm{w_R}(\be) \le r.
    $$
\end{definition}

\begin{definition}
    Given $H \in \Fqm^{n-k \times n}$ a parity check matrix of an $\Fqm$-linear rank metric code $\calC$, a syndrome $\by H^{\intercal} = \bs \in \Fqm^{n-k}$ and a small integer $r$. 
    The \textbf{Rank Support Recovery problem} is to find a subspace $\calB \subseteq \Fqm$ of dimension $\le r$ such that there exists $\be \in \calB^n$ and
    $
        \be H^\intercal = \bs.
    $
\end{definition}
A vector $\be \in \Fqm$ of rank $r$ will have a support $\calB = \Span{\Fq}{\be} = \Span{\Fq}{\Enum{\beta}{1}{r}}$ where $\bbeta = (\Enum{\beta}{1}{r})$ is a basis of $\calB$.
The vector $\be$ can then be represented as $\be = \bbeta C_e$ where $C_e \in \Fq^{r \times n}$ is a matrix over $\Fq$ which is the matrix of the coordinates of $\be$ with respect to the basis $\bbeta$.

Many decoding algorithms for rank metric codes consist of two major steps: the first step is to find the error support and recover one basis of the support, and the second is to use the basis to recover the matrix of the coordinates of the error.
Usually, once the error support is known, finding the matrix of the coordinates reduces to solving a linear system in $nr$ variables corresponding to the $nr$ entries of $C_e$.
In this paper, we will focus on the Rank Support Recovery problem for the LRPC codes.

\section{LRPC codes and their decoding} \label{Sec:LRPC}
LRPC codes were introduced in 2013 by Gaborit, Murat, Ruatta and Z\'{e}mor \cite{gaborit2013}.
\begin{definition}[LRPC code]\label{def:LRPC}
    Let $\calA \subseteq \Fqm$ be an $\Fq$-linear subspace of $\Fqm$ of dimension $d$ and $H \in \calA^{(n-k) \times n}$ a matrix of rank $n-k$.
    The code $\calC$ having $H$ as a parity check matrix is called an $[n,k]_{q^m}$ \textbf{LRPC code} of dual rank weight $d$.
\end{definition}
Due to their lack of a strong algebraic structure, these codes were proposed for several cryptographic applications \cite{RankSign2014, ROLLO, DuranteSiciliano}.
LRPC codes have a polynomial time decoding algorithm divided into two steps.
The first step aims to recover the error support; the second step uses the error support acquired in the first step to find the exact coordinates of the error.
Due to the page limit, in this paper we will focus only on the error support recovery part of the algorithm.

Consider $\calC \subseteq \Fqm^n$ be an LRPC code as in Definition \ref{def:LRPC}. 
Suppose we receive a message $\by = \bx + \be$, where $\calB = \Span{\Fq}{\be}$.
If $\dim(\calB) = r$ then $\calB$ will have a basis $\bbeta = (\Enum{\beta}{1}{r}) \in \Fqm^r$.
Consider now the syndrome $\bs = \be H^\intercal $, where $H \in \calA^{(n-k) \times n}$ for some space $\calA$ of dimension $d$.
Each component $s_i = \sum_{j = 1}^n e_j h_{i,j}$ of the syndrome is obtained as the sum of elements in the set $\calA \calE  = \{ e h \mid e \in \calE, h \in \calA \}.$
A key observation is that, if we denote with $\calA.\calE = \Span{\Fq}{\calA \calE}$ the smallest $\Fq$-linear vector subspace containing the set $\calA \calE$, we automatically have that $\bs \in (\calA.\calE)^{n-k}$.

Let $\balpha = (\Enum{\alpha}{1}{d})$ be a basis of $\calA$ and $\bb = (\Enum{\beta}{1}{r})$ be a basis of $\calB$.
Observe that for any $c = a e \in \calA.\calE$, we have 
$$
    c = \Big{(}\sum_{i=1}^da_i \alpha_i \Big{)}\Big{(}\sum_{j=1}^re_j \beta_j\Big{)} = \sum_{(i,j) \in [d] \times [r]} a_i e_j (\alpha_i \beta_i).
$$
As a consequence $\calA.\calE = \Span{\Fq}{\balpha \otimes \bbeta}$, where $\balpha \otimes \bbeta = (\alpha_1 \beta_1, \ldots, \alpha_d \beta_r)$ is a vector of length $rd$. 
This means that $\dim(\calA.\calE) \le \min(rd, m).$ In \cite{gaborit2013} the authors proved that the equality holds with a high probability.

After this observation, we are ready to present the error support recovery algorithm.
Since $\bs \in (\calA.\calE)^{n-k},$ we have that $\calS = \Span{\Fq}{\bs} \subseteq \calA.\calB.$
For $n-k \ge rd$, if we consider the $n-k$ components of $\bs$ as elements randomly extracted with a uniform distribution from $\calA.\calE$, with a good probability, we have $\calS = \calA.\calE$.
For the success of this algorithm it is crucial that $\calS = \calA.\calE$, therefore we need the condition $n-k \ge rd.$
The probability that $\calS = \calA.\calB$ is estimated to be at least $1 - q^{rd - (n-k)}$ \cite{ROLLO}.
This probability can be made arbitrarily small by choosing $n-k$ significantly larger than $rd$.
Notice that, if $\dim(\calA.\calB) = rd - t$, then we might require just $n-k \ge rd - t$. 
This means that in the case $\dim(\calA.\calB) < rd$, using the same number $n-k \ge rd$ of parity check equations, we will have an even better probability that $\calS = \calA.\calB.$
Therefore, considering $\dim(\calA.\calB) = rd$ can then be regarded as a worst-case scenario.

Suppose that $\calS = \calA.\calB$,
notice that, for any element $\alpha_i$ of the basis $\balpha,$ we have that $\calB \subset \calS \alpha^{-1}$.
If we intersect all these sets, we have that $\calB \subseteq \calS \alpha_{1}^{-1} \cap \ldots \cap \calS \alpha_{d}^{-1}$ where the equality holds with an estimated probability of at least $1-q^{-(d-1)(m-rd-r)}$ \cite[Prop. 2.4.2]{ROLLO}.
For large values of $m,$ this probability becomes quickly negligible.

Both of the failure probabilities reduce exponentially in $q$. 
The first probability of failure is harder to reduce than the second when $d>2$ since, increasing $n-k$ by one, improves the probability by a factor $q^{-1}$ while, increasing $m$ by one, improves
the second probability by a factor $q^{-(d -1)}.$

In \cite{LRPC-2019-TIT} Aragon, Gaborit, Hauteville, Ruatta, and Zémor gave two improved versions of the decoding algorithm that make use of two different expanding functions to be able to decode when $\calS \subsetneq \calA.\calE$.
In particular, they were able to decode when $\dim(S) = rd - c$, with $c < r$.
The main drawback of these two new algorithms is the need of a larger $m$ which has to be in the order of $3rd - 2$ in the first algorithm, and $2rd - r$ in the second.
Their first algorithm was able to decode LRPC codes with $n-k > (d-1)r$. The second improved considerably the success probability while keeping $n-k \ge rd$.

The algorithms we propose tackle the same problem. They offer  similar improvements while keeping $m$ in the order of $rd$.
We are able to decode when $c \leq d-2$ while asking $n-k \ge (r-1)d + 2$.

\section{Improved Error Support Recovery Algorithms}\label{Sec:decoder}
Consider an LRPC code defined over $\Fqm^n$ with parity check matrix $H \in \calA^{(n-k) \times n}$ where $\dim(\calA) = d$.
Suppose we receive $\by = \bx + \be$ where $\bx \in \calC$ and $\be \in \Fqm^n$ is an error of rank $r.$ 
Then the support $\calE = \Span{\Fq}{\be}$ has dimension $\dim(\calE) = r.$

We already showed $\bs = \be H^\intercal \in (\calA.\calE)^{n-k}$.
For the rest of this section let $\balpha = (\Enum{\alpha}{1}{d})$ be a basis of $\calA$ and $\bbeta = (\Enum{\beta}{1}{r})$ be a basis of $\calE$.
We will assume $\dim(\calA.\calE) = rd$ which is both the most common and the worst case, 

Suppose all of the $n-k$ elements of the syndrome are linearly independent and that $n-k = rd - c$.
The vector space $\calS = \Span{\Fq}{\bs}$ will be a proper subspace of $\calA.\calE$ of co-dimension $c$.
In \cite{LRPC-2019-TIT} it was shown how the code can still be decoded when $c < r$. 
The reason behind this condition was given by the following lemma \cite{LRPC-2019-TIT}.

\begin{lemma}\label{lm:r-c}
    Let $\calA,\calB$ be two subspaces of $\Fqm$ of dimension $d$ and $r$, respectively, and
     $\calS$ be a subspace of $\calA.\calB$ with $\dim(\calS)=\dim(\calA.\calB)-c$.
    For any nonzero elements $a\in \calA$, $b\in\calB$, we have
    $
    \dim(\calS a^{-1} \cap \calB) \geq r - c$ and $\dim(\calS b^{-1} \cap \calA) \geq d-c.
    $
    
\end{lemma}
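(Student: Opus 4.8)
The plan is to prove the second inequality $\dim(\calS b^{-1} \cap \calA) \geq d-c$; the first is completely symmetric (swap the roles of $\calA$ and $\calB$), so it suffices to handle one of them. Fix a nonzero $b \in \calB$. The basic idea is to multiply everything by $b$: the map $x \mapsto xb$ is an $\Fq$-linear bijection of $\Fqm$, hence it preserves all dimensions. Under this map, $\calS b^{-1} \cap \calA$ has the same dimension as $\calS \cap \calA b$, so the claim becomes $\dim(\calS \cap \calA b) \geq d - c$.

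Now I would exploit two containments. First, since $b \in \calB$ and $\calA b = \{ a b \mid a \in \calA\} \subseteq \calA \calB \subseteq \calA.\calB$, the space $\calA b$ is a subspace of $\calA.\calB$. Second, $\dim(\calA b) = \dim(\calA) = d$ because $x \mapsto xb$ is a bijection. So inside the ambient space $\calA.\calB$ we have two subspaces: $\calS$ of codimension $c$, and $\calA b$ of dimension $d$. The standard dimension (Grassmann) formula then gives
\[
\dim(\calS \cap \calA b) \;=\; \dim(\calS) + \dim(\calA b) - \dim(\calS + \calA b) \;\geq\; \bigl(\dim(\calA.\calB) - c\bigr) + d - \dim(\calA.\calB) \;=\; d - c,
\]
where the inequality uses $\dim(\calS + \calA b) \leq \dim(\calA.\calB)$. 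Transporting back through the bijection $x \mapsto x b^{-1}$ yields $\dim(\calS b^{-1} \cap \calA) \geq d - c$, as desired. The argument for $\dim(\calS a^{-1} \cap \calB) \geq r-c$ is identical: multiply by $a$, observe $a\calB \subseteq \calA.\calB$ has dimension $r$, and apply the same Grassmann bound.

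There is essentially no serious obstacle here — the only things to be careful about are (i) checking that $\calA b \subseteq \calA.\calB$ (which needs $b \in \calB$, and is where the hypothesis on $b$ is used), (ii) that multiplication by a nonzero field element is an $\Fq$-linear isomorphism so it preserves dimensions of subspaces and commutes with intersection, and (iii) that $\dim(\calS+\calA b)$ cannot exceed $\dim(\calA.\calB)$ since both summands lie in $\calA.\calB$. If one wants to avoid reasoning about codimension in a possibly-not-full-dimensional ambient space, one can equally work inside all of $\Fqm$: there $\dim\calS = \dim(\calA.\calB) - c$ and $\dim(\calA b)=d$ both hold, and $\dim(\calS + \calA b)\le \dim(\calA.\calB)$ still holds because $\calS \subseteq \calA.\calB$ and $\calA b \subseteq \calA.\calB$. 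Either way the one-line Grassmann estimate closes the proof.
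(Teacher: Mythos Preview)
Your proof is correct and follows essentially the same route as the paper's: apply the Grassmann dimension formula to the two subspaces and bound the dimension of their sum by $\dim(\calA.\calB)$, using that both summands (after a suitable nonzero scalar multiplication) lie inside $\calA.\calB$. The only cosmetic difference is that the paper works directly with $\calS a^{-1}+\calB \subseteq (\calA.\calB)a^{-1}$ rather than first multiplying back by $a$ (resp.\ $b$) as you do; the argument is otherwise identical.
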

\begin{proof}
    Note that $\calS a^{-1} +  \calB = (\calS +  \calB a) a^{-1} \subseteq (\calA.\calB) a^{-1}$. Thus $\dim(\calS a^{-1} +  \calB) \le \dim(\calA.\calB).$
    From this fact, we have the following inequality
    \begin{equation*}
    \begin{aligned}
        \dim(\calS a^{-1} \cap \calB) & =  \dim(\calS a^{-1}) +  \dim(\calB) - \dim(\calS a^{-1} + \calB) \\
                                 & \ge   \dim(\calA.\calB) - c + r - \dim(\calA.\calB) 
        = r-c.
    \end{aligned}
\end{equation*}
 Due to the symmetrical role of $\calA$ and $\calB$, the second statement follows similarly.
\end{proof}
Lemma \ref{lm:r-c} shows that, when $c < r$, many elements of $\calB$ are contained in $\calS a^{-1}$ for every nonzero element $a$ in $\calA$.

The original algorithm considers only the $d$ sets of the form $\calS \alpha_i^{-1}$ where $\balpha = (\Enum{\alpha}{1}{d})$ is a basis of $\calA$.
In our case, since different subsets of $\calE$ might be contained in different sets, we want to use as many different sets as we can.
To better understand why considering the elements of a basis might not be enough, consider the following example.
Let $s_1 = \alpha_1 \beta_1 + \alpha_2 \beta_2$ and $s_2 = \alpha_1 \beta_2 + \alpha_2 \beta_1$.
If we consider the set $\calS = \Span{\Fq}{s_1, s_2}$ we have that $s_1 + s_2 = (\alpha_1 + \alpha_2)(\beta_1 + \beta_2) \in \calS$ then $(\beta_1 + \beta_2) \in \calS (\alpha_1 + \alpha_2)^{-1} \cap \calE$ while $\calS \alpha_1^{-1} \cap \calE$ and $\calS \alpha_2^{-1} \cap \calE$ are both empty.

Intuitively, if we consider all the sets of the form $S a^{-1}$ for some $a \in \calA^*,$ the elements of $\calE \cap S a^{-1}$ will appear in many of the other sets with the same form, while the elements of $\calS a^{-1}\setminus \calE,$ will occur with significantly less frequency in the other sets.

The following theorem gives a simple way to count how many of the sets in the form $\calS a^{-1}$ for $a \in \calA^*$
contain an element $x \in \Fqm$.
\begin{theorem}\label{th:SxCapA} 
    Let $\calS$ be a subspace of $\calA.\calE$, where  
    $\calA,\calE \subseteq \Fqm$ are two subspaces of $\Fqm$ of dimension $d$ and $r$.
    Define a multi-set $Z$ as a union of $\calS a^{-1}$ for all $a\in \calA^*$, i.e., 
    \begin{equation}
        Z = \biguplus_{a \in A^*} \calS a^{-1}.
    \end{equation}
    The multiplicity of $x \in \Fqm^*$ in $Z$ is given by 
    \begin{equation}
        \Mul(x,Z) = |\calS x^{-1} \cap \calA^*| = q^{\dim(\calS x^{-1} \cap \calA)} - 1
    \end{equation}
    and $\Mul(x=0,Z) = q^d - 1$.
\end{theorem}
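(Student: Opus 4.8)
The plan is to reduce the multiplicity count to a membership condition and then invoke $\Fq$-linearity. By the definition of the multi-set $Z=\biguplus_{a\in\calA^*}\calS a^{-1}$, for a fixed $x\in\Fqm^*$ the multiplicity $\Mul(x,Z)$ is precisely the number of $a\in\calA^*$ for which $x\in\calS a^{-1}$. The first step is to rewrite this membership: for nonzero $a,x\in\Fqm$ we have $x\in\calS a^{-1}\iff xa\in\calS\iff a\in\calS x^{-1}$, where in the last equivalence we used $x\neq 0$. Hence the set of $a$ contributing to $\Mul(x,Z)$ is exactly $\calA^*\cap\calS x^{-1}$, and $\Mul(x,Z)=|\calA^*\cap\calS x^{-1}|$.

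The second step is to count this set. Multiplication by the fixed element $x^{-1}\in\Fqm^*$ is an $\Fq$-linear automorphism of $\Fqm$ (scalars lie in $\Fq$ and commute with $x^{-1}$), so $\calS x^{-1}$ is an $\Fq$-subspace of $\Fqm$; consequently $\calA\cap\calS x^{-1}$ is an $\Fq$-subspace and has $q^{\dim(\calA\cap\calS x^{-1})}$ elements. Removing the zero vector yields
$$
\Mul(x,Z)=|\calA^*\cap\calS x^{-1}|=q^{\dim(\calS x^{-1}\cap\calA)}-1,
$$
which is the claimed formula.

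The last step handles $x=0$. Since $\calS$ is an $\Fq$-subspace it contains $0$, so $0=0\cdot a^{-1}\in\calS a^{-1}$ for every $a\in\calA^*$; therefore $0$ occurs once for each of the $|\calA^*|=q^d-1$ sets in the union, giving $\Mul(0,Z)=q^d-1$.

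There is essentially no obstacle here; the one point to state carefully is that $s\mapsto sa^{-1}$ and $s\mapsto sx^{-1}$ are $\Fq$-linear bijections, so that $\calS a^{-1}$ and $\calS x^{-1}$ are genuine subspaces and all the relevant cardinalities are powers of $q$. Everything else is a direct translation between the "$x\in\calS a^{-1}$" and "$a\in\calS x^{-1}$" viewpoints followed by a dimension count.
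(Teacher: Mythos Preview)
Your proof is correct and follows essentially the same approach as the paper: both arguments translate the condition $x\in\calS a^{-1}$ into $a\in\calS x^{-1}$ via $x=sa^{-1}\iff a=sx^{-1}$, identify the contributing set as $\calA^*\cap\calS x^{-1}$, and then count using the fact that $\calA\cap\calS x^{-1}$ is an $\Fq$-subspace. Your write-up is slightly more explicit about the $\Fq$-linearity of multiplication by $x^{-1}$, but the idea is the same.
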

\begin{proof}
    Consider the set 
    $$\alpha(x) = \{ a \in \calA^* \mid \exists s \in \calS : x = s a^{-1} \}.$$
    By definition of $Z$, we have that $|\alpha(x)| = \Mul(x, Z)$.
    For $x = 0$, since $0 \in \calS,$ we have that $0 = 0 a^{-1}$ for all possible $a \in \calA^*$ therefore $\alpha(0) = \calA^*$ and $\Mul(0,Z) = |\calA^*| = q^{d} - 1$.
    To complete the proof of the theorem it is sufficient to show that $\alpha(x) = \calS x^{-1} \cap \calA^*$.
    Notice that, for $a \in \calA^*, 0 \neq x \in \calS$ we have
    \begin{equation*}
       x = s a^{-1} \iff a = s x^{-1}.
    \end{equation*}
    By the definition of $\alpha(x)$ and $\calS x^{-1}$,
    the first equation is equivalent to $a \in \alpha(x)$ while the second equation is equivalent to $a \in \calS x^{-1} \cap \calA^*.$ 
    This shows that $\alpha(x) = \calS x^{-1} \cap \calA^*$.
    We know that $|\calS x^{-1} \cap \calA| = q^{\dim(\calS x^{-1} \cap \calA)}$. The desired conclusion thus follows.
\end{proof}
The following corollaries characterize the multiplicities of different elements in $Z$.

\begin{corollary}
    Let $\calA,\calB,\calS$ and $Z$ be as in Theorem \ref{th:SxCapA}, where $\calS$ has dimension $rd - c$. Let 
    $w=\max\{1, rd-c+d-m\}$. Then
    for any $x \in Z \setminus \calB$, its multiplicity  $\Mul(x,Z) \ge q^w - 1.$
\end{corollary}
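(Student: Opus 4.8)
The plan is to reduce the whole statement to Theorem~\ref{th:SxCapA} together with the Grassmann dimension formula, so that no real computation is needed. First I would observe that since $0\in\calB$, the hypothesis $x\in Z\setminus\calB$ forces $x\in\Fqm^*$; hence the clean formula of Theorem~\ref{th:SxCapA} applies and $\Mul(x,Z)=q^{\dim(\calS x^{-1}\cap\calA)}-1$. Consequently it suffices to prove the dimension bound $\dim(\calS x^{-1}\cap\calA)\ge w=\max\{1,\,rd-c+d-m\}$, and then raise $q$ to that exponent and subtract $1$.

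I would establish the two terms of the maximum separately. For the term $1$: the very fact that $x\in Z$ means $\Mul(x,Z)\ge 1$, i.e.\ $\calS x^{-1}\cap\calA^*\neq\emptyset$, so $\dim(\calS x^{-1}\cap\calA)\ge 1$. For the term $rd-c+d-m$: multiplication by the nonzero scalar $x^{-1}$ is an $\Fq$-linear automorphism of $\Fqm$, so $\dim(\calS x^{-1})=\dim(\calS)=rd-c$; applying the Grassmann identity inside the $m$-dimensional $\Fq$-space $\Fqm$ gives
\begin{equation*}
\dim(\calS x^{-1}\cap\calA)=\dim(\calS x^{-1})+\dim(\calA)-\dim(\calS x^{-1}+\calA)\ge (rd-c)+d-m,
\end{equation*}
since $\dim(\calS x^{-1}+\calA)\le m$. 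Combining the two bounds yields $\dim(\calS x^{-1}\cap\calA)\ge w$, whence $\Mul(x,Z)=q^{\dim(\calS x^{-1}\cap\calA)}-1\ge q^w-1$.

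There is essentially no hard step here; the only points requiring care are (i) justifying $x\neq 0$ so that the formula of Theorem~\ref{th:SxCapA} (rather than its separate $x=0$ case) is the one being used, and (ii) recording that the hypothesis $x\in Z$ is precisely what supplies the ``$1$'' in the definition of $w$, which matters because $rd-c+d-m$ alone can be non-positive when $m$ is large — exactly the regime in which one must fall back on $\dim(\calS x^{-1}\cap\calA)\ge 1$.
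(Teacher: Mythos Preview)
Your proposal is correct and follows essentially the same route as the paper: both arguments invoke Theorem~\ref{th:SxCapA} to reduce to bounding $\dim(\calS x^{-1}\cap\calA)$, then obtain the bound $1$ from membership in $Z$ and the bound $rd-c+d-m$ from the Grassmann formula together with $\dim(\calS x^{-1}+\calA)\le m$. Your extra remark that $x\notin\calB$ forces $x\neq 0$ (so that the $x\in\Fqm^*$ case of Theorem~\ref{th:SxCapA} applies) is a point the paper's proof uses implicitly but does not spell out.
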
\label{co:SxCapA}
\begin{proof}
    Since $x \in Z$, there exists $a \in \calA$ such that $x = s a^{-1}$ for some $s \in \calS$, implying $a = s x^{-1} \in \calS x^{-1}$. Thus  $\dim(\calS x^{-1} \cap \calA) \ge 1$.
    In addition, we know that 
    \begin{align*}
        \dim(\calS x^{-1} \cap \calA) 
        & = \dim(\calS x^{-1}) + \dim(\calA) - \dim(\calS x^{-1} + \calA)
        \\& = \dim(\calS) +\dim(\calA)  - \dim(\calS x^{-1} + \calA)
        \\&\geq rd-c+d -m
    \end{align*} 
    since $\calS x^{-1} + \calA$ has dimension at most $m$.
\end{proof}

\begin{corollary}\label{co:SbCapA}
Let $\calA,\calB,\calS$ and $Z$ be as in Theorem \ref{th:SxCapA}, where $\calS$ has dimension $rd - c$. 
    Then $\Mul(b,Z) \ge q^{d-c} - 1$, $\forall b \in \calB$.
\end{corollary}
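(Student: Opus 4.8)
The plan is to combine the multiplicity formula of Theorem~\ref{th:SxCapA} with the second dimension bound of Lemma~\ref{lm:r-c}. First I would dispose of the trivial case $b = 0$: by Theorem~\ref{th:SxCapA} we have $\Mul(0, Z) = q^d - 1$, and since $c \ge 0$ this is already at least $q^{d-c} - 1$, so the claim holds there.

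For a nonzero $b \in \calB$, Theorem~\ref{th:SxCapA} gives $\Mul(b, Z) = |\calS b^{-1} \cap \calA^*| = q^{\dim(\calS b^{-1} \cap \calA)} - 1$, so it suffices to show $\dim(\calS b^{-1} \cap \calA) \ge d - c$. This is exactly the second inequality of Lemma~\ref{lm:r-c}, applied to the nonzero element $b \in \calB$. The hypotheses line up: ``as in Theorem~\ref{th:SxCapA}'' means $\calS \subseteq \calA.\calB$, and the standing worst-case assumption $\dim(\calA.\calB) = rd$ of Section~\ref{Sec:decoder} turns the present hypothesis $\dim(\calS) = rd - c$ into $\dim(\calS) = \dim(\calA.\calB) - c$, which is what Lemma~\ref{lm:r-c} requires. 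Substituting the bound $\dim(\calS b^{-1} \cap \calA) \ge d-c$ into the multiplicity formula yields $\Mul(b,Z) \ge q^{d-c} - 1$, completing the argument.

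Since every step is a direct appeal to an already-established result, I do not expect a genuine obstacle; the only point needing a moment's care is the bookkeeping that identifies the hypothesis $\dim(\calS) = \dim(\calA.\calB) - c$ of Lemma~\ref{lm:r-c} with the hypothesis $\dim(\calS) = rd - c$ stated here, which is legitimate precisely because of the assumption $\dim(\calA.\calB) = rd$ adopted at the start of the section. It is worth remarking that this corollary is the ``$\calB$-side'' companion of Corollary~\ref{co:SxCapA}: together they say that every element of $\calB$ occurs at least $q^{d-c}-1$ times in $Z$, whereas an element outside $\calB$ is only guaranteed to occur $q^w-1$ times, and it is exactly this gap that the error support recovery algorithms will exploit.
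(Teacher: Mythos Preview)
Your proposal is correct and follows essentially the same route as the paper: handle $b=0$ directly via $\Mul(0,Z)=q^d-1$, and for nonzero $b$ combine the multiplicity formula of Theorem~\ref{th:SxCapA} with the bound $\dim(\calS b^{-1}\cap\calA)\ge d-c$ from Lemma~\ref{lm:r-c}. Your explicit remark about matching the hypothesis $\dim(\calS)=rd-c$ with $\dim(\calS)=\dim(\calA.\calB)-c$ via the standing assumption $\dim(\calA.\calB)=rd$ is a useful clarification that the paper leaves implicit.
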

\begin{proof}
    It is clear that $Mul(0,Z) = q^d - 1 \ge q^{d-c} - 1$.
    For $b \in \calB^*$, it follows from Theorem \ref{th:SxCapA} that
    \begin{equation}\label{eq:Mul(b,Z)}
        \Mul(b,Z) = |\calS b^{-1} \cap \calA^*| = |\calS b^{-1} \cap \calA|
        - 1.
    \end{equation}
        By
    Lemma \ref{lm:r-c} we know that $|\calS b^{-1} \cap \calA| \ge q^{d-c}$.
    This together with the above equality leads to 
    the desired statement $\Mul(b,Z) \ge q^{d-c} - 1.$
\end{proof}

Our main goal is to recover the error support $\calB$ when the support $\calS$ is a proper subspace of $\calA.\calB$. 
From the above analysis, we can create the multi-set $Z$ and focus only on the elements with multiplicity greater than or equal to $ q^{d - c} - 1.$
Consider the set 
\begin{equation}
    \Tilde{\calB} = \{x \in Z \mid \Mul(x,Z) \ge q^{d-c} - 1 \}.
\end{equation} Thanks to Corollary \ref{co:SbCapA}, we know that $\calB \subseteq \Tilde{\calB}.$ 
From Corollary \ref{co:SxCapA}, it is better to choose $m \ge rd - c + d-1$ such that the generic element of $Z$ can have multiplicity as low as $q-1$ while the elements of $\calB$ will always have multiplicity at least $q^{d-c} - 1$.
It is clear that $d-c \geq 2$ is a minimal condition to distinguish a generic element in $Z$ from the elements of $\calE$ when $q=2$.

It is possible that some elements of $Z$ have large multiplicity even if they are not elements of $\calE$, when that happens we have $|\Tilde{\calE}| > |\calE| = q^r$. Assume that $\Tilde{\calE} = \calE \cup X \subseteq \Fqm$ where $X$ is a set of small cardinality $|X| < q^r$. Note that for any $x\in \calE$, $\calE \subset (x + \Tilde{\calE}) \cap\Tilde{\calE}$. With this fact, we can quickly obtain $\calE$ from $\Tilde{\calE}$ 
in the following way: take a random $x \in \Tilde{\calE}$, if $|\Tilde{\calE} \cap (\Tilde{\calE} + x)| > q^r$, then take $\Tilde{\calE} = \Tilde{\calE} \cap (\Tilde{\calE} + x)$, and continue this process until $|\Tilde{\calE}| = |{\calE}|= q^r.$ Such a process of filtering works well when the number of outliers is small compared to the size of $\calE$. 

The above process of selecting elements in $Z$ with multiplicites at least $q^{d-c}-1$ 
is summarized in \ref{Algorithm 1}. 
Empirically, setting a large enough $m$, we will directly obtain $\Tilde{\calE }= \calE$.
Diminishing the value of $m$, the set $\Tilde{\calE }\setminus \calE$ will progressively grow until it will not be possible anymore to retrieve the correct $\calE$.



\begin{algorithm}[http]
		\small
		\SetAlgoLined
		\KwIn{A parity check matrix $H \in \calA^{(n-k)\times n}$ where $\dim(\calA) = d$ and $\calS = \Span{\Fq}{\by H^\intercal}$ of dimension $rd-c$ where $\by = \bx + \be \in \Fqm^n$, $\bx \in \calC$ and $\be$ is an error of rank $r$.}
        \KwOut{The support $\calE = \Span{\Fq}{\be}$ of dimension $r$.}
        \tcp{Assumption: $\dim(\calA.\calE) = rd$}
        \uIf{$rd - \dim(\calS) < d - 2$}
        {   $c= rd - \dim(\calS)$\;
            $Z = \{* \, \,*\}$   \tcp*{Create an empty multi-set}
            \For{$a \in \calA^*$}
            {
                \For{$s \in \calS$}
                {
                    
                    $Z = Z \uplus \{s a^{-1}\}$ \tcp*{Compute $Z$} 
                }
            }
            $\calE = \{ \}$ \;
            \For{$z \in Z$}
            {
                \If{$\Mul(z, Z) \ge q^{d-c} -1$}
                {
                    $\calE = \calE \cup \{z \}$ \;
                }
            }
            Return $\calE$\;
        }
        \tcp{The dimension of $\calS$ is too low}
        \Else 
        { 
            Error Support Recovery Failure\;
        }
		\caption{Error support recovery of LRPC codes (Theoretical Version)}\label{Algorithm 1}
		\normalsize
	\end{algorithm}

In \ref{Algorithm 1}, the generation of the multi-set $Z = \biguplus_{a \in \calA^*} \calS a^{-1}$ has a high time and space complexity.
In order to address this drawback, we propose a more practical algorithm, which select elements from the intersection of $t>2$ subspaces $\calS a^{-1}$ for some rounds and generates the error support generated by those elements.
We first give some theoretical analysis before presenting the second algorithm. 
\begin{proposition}\label{Prop}
Let $\calS\subset \calA.\calB$ be given as in Theorem \ref{th:SxCapA} with dimension $rd - c$. Take nonzero elements $a_1, \dots, a_t$ from $\calA$.
Then 
$
\dim(\calS a_1^{-1} \cap \cdots \cap \calS a_t^{-1} \cap \calB) \geq r-tc.
$
\end{proposition}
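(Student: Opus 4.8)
The plan is to proceed by induction on $t$, using the dimension formula for intersections together with the key inclusion already exploited in Lemma~\ref{lm:r-c}. The base case $t=1$ is precisely Lemma~\ref{lm:r-c}, which gives $\dim(\calS a_1^{-1} \cap \calB) \geq r-c$. For the inductive step, write $\calW_t = \calS a_1^{-1} \cap \cdots \cap \calS a_t^{-1} \cap \calB$ and suppose $\dim(\calW_{t-1}) \geq r-(t-1)c$. We want to bound $\dim(\calW_t) = \dim(\calW_{t-1} \cap \calS a_t^{-1})$ from below.

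The main step is to control $\dim(\calW_{t-1} + \calS a_t^{-1})$. I would mimic the trick from Lemma~\ref{lm:r-c}: since $\calW_{t-1} \subseteq \calB$, we have $\calW_{t-1} + \calS a_t^{-1} = (\calW_{t-1} a_t + \calS) a_t^{-1} \subseteq (\calB a_t + \calS) a_t^{-1} \subseteq (\calA.\calB) a_t^{-1}$, where the last inclusion uses $\calB a_t \subseteq \calA.\calB$ (as $a_t \in \calA$) and $\calS \subseteq \calA.\calB$. Hence $\dim(\calW_{t-1} + \calS a_t^{-1}) \leq \dim(\calA.\calB)$. Combining this with the inclusion--exclusion identity,
\begin{align*}
\dim(\calW_t) &= \dim(\calW_{t-1}) + \dim(\calS a_t^{-1}) - \dim(\calW_{t-1} + \calS a_t^{-1}) \\
&\geq \bigl(r-(t-1)c\bigr) + \bigl(\dim(\calA.\calB)-c\bigr) - \dim(\calA.\calB) = r - tc,
\end{align*}
which closes the induction. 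Here I use $\dim(\calS a_t^{-1}) = \dim(\calS) = \dim(\calA.\calB)-c$.

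The only delicate point is the inclusion $\calW_{t-1} + \calS a_t^{-1} \subseteq (\calA.\calB) a_t^{-1}$, and in particular checking that $\calW_{t-1} a_t \subseteq \calA.\calB$ — this is where $\calW_{t-1} \subseteq \calB$ is essential, so one must carry the intersection with $\calB$ through the induction rather than dropping it. I do not expect any real obstacle beyond being careful that every term in the dimension identity refers to $\Fq$-subspaces (multiplication by the fixed nonzero scalar $a_t \in \Fqm$ is an $\Fq$-linear bijection of $\Fqm$, so it preserves dimensions and sends subspaces to subspaces). One should also note the bound is only meaningful when $r - tc \geq 0$; otherwise the statement is vacuous.
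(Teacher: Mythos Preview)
Your proof is correct and follows the same inductive strategy as the paper: both peel off one factor at a time using the dimension formula $\dim(U\cap V)=\dim U+\dim V-\dim(U+V)$, with Lemma~\ref{lm:r-c} providing the base estimate. The only (harmless) difference is in bounding the sum term: the paper first sets $\calT_i=\calS a_i^{-1}\cap\calE$ so that all the $\calT_i$ live inside $\calE$ and the sum is trivially bounded by $r$, whereas you keep the full $\calS a_t^{-1}$ and instead bound $\calW_{t-1}+\calS a_t^{-1}\subseteq(\calA.\calB)a_t^{-1}$, which requires your extra inclusion argument but yields the same $-c$ per step.
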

\begin{proof}
Take $\calT_i=\calS a_1^{-1} \cap E$ of dimension $\ge r-c$. Then
\begin{align*}
    &\dim\big(\calT_1\cap \cdots \cap \calT_t \big) 
     \\  = &\dim(\calT_2 \cap \cdots \cap \calT_t \big) + \dim\big(\calT_1)- \dim(\calT_1 + (\calT_2 \cap \dots \cap \calT_t))
    \\\geq & \dim(\calT_2 \cap \cdots \cap \calT_t \big) + (r-c) - r = \dim(\calT_2 \cap \cdots \cap \calT_t \big)-c.
\end{align*}
Iterating this process on $t$ leads to the desired inequality.
\end{proof}
By Proposition \ref{Prop}, the intersection of $t$ subspace $\calS a_i^{-1}$ may contribute to $r-tc$ independent elements in $\calE$. This implies that we can recover the error support $\calE$ by accumulating elements from such intersections.  
The following result is obtained by applying 
\cite[Prop. 2.4.2]{ROLLO}.
\begin{proposition} \label{prop3}
    Consider $\calE$ a subspace of dimension $r$. Let $\Omega$ be the set of all subspaces of dimension $rd-c$ having intersection of dimension at least $r-c$ with $\calE$.
    The probability that the intersection of $t$ subspaces $\Enum{\calS}{1}{t}$ independently chosen uniformly at random in $\Omega$ contains some elements not in $\calE$ is approximated by
    $$
    \text{Prob}((\calS_1\cap \cdots \cap \calS_t) \subset \calE) )  \approx 1 - 1/q^{(t-1)m+r - trd}.
    $$
\end{proposition}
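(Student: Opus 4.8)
The plan is to transport the question to the quotient space $\calV:=\Fqm/\calE$, which has dimension $m-r$, and there invoke the standard subspace-intersection estimate \cite[Prop. 2.4.2]{ROLLO}. For the purpose of the approximation I would fix the boundary case $\dim(\calS_i\cap\calE)=r-c$ for every $i$: this is the generic situation inside $\Omega$ once $m\gtrsim rd$, and if some $\calS_i$ met $\calE$ in a larger dimension its image in $\calV$ would only shrink, which can only raise the success probability, so this case is the one the estimate should record. Writing $\overline{\calS_i}$ for the image of $\calS_i$ in $\calV$, we then have $\dim\overline{\calS_i}=(rd-c)-(r-c)=rd-r$.

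Next I would establish the (heuristic) equivalence $\calS_1\cap\cdots\cap\calS_t\subseteq\calE \iff \overline{\calS_1}\cap\cdots\cap\overline{\calS_t}=\{0\}$. The implication ``$\Leftarrow$'' is immediate: if $x\in\calS_1\cap\cdots\cap\calS_t$ then its class lies in $\overline{\calS_1}\cap\cdots\cap\overline{\calS_t}=\{0\}$, hence $x\in\calE$. For ``$\Rightarrow$'' one must lift a nonzero class $\bar y\in\bigcap_i\overline{\calS_i}$ to a single element belonging to every $\calS_i$; picking representatives $y_i\in\calS_i$ with $\bar y_i=\bar y$, this amounts to finding $e\in\calE$ with $e\equiv y_i-y_1\pmod{\calS_i\cap\calE}$ for all $i$, which is solvable whenever the natural map $\calE\to\bigoplus_{i=1}^t \calE/(\calS_i\cap\calE)$ is surjective. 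As the $\calS_i$ are drawn at random, the subspaces $\calS_i\cap\calE$ are in general position with overwhelming probability, and then surjectivity holds precisely in the regime $r\ge tc$ in which Proposition~\ref{Prop} is already meaningful; so up to a lower-order correction the two events coincide, which is what the word ``approximated'' in the statement permits.

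Finally, since the $\calS_i$ are independent and uniform in $\Omega$, the images $\overline{\calS_i}$ behave as $t$ independent, uniformly random subspaces of dimension $rd-r$ in $\calV$ (of dimension $m-r$). By \cite[Prop. 2.4.2]{ROLLO}, the probability that $\overline{\calS_1}\cap\cdots\cap\overline{\calS_t}$ exceeds its generic dimension $\max\{0,\ t(rd-r)-(t-1)(m-r)\}$ --- in particular the probability that it is nonzero, which is the generic behaviour exactly when $(t-1)(m-r)\ge t(rd-r)$ --- is about $q^{-((t-1)(m-r)-t(rd-r))}$. Simplifying the exponent, $(t-1)(m-r)-t(rd-r)=(t-1)m+r-trd$, so $\text{Prob}\big((\calS_1\cap\cdots\cap\calS_t)\subseteq\calE\big)\approx 1-1/q^{(t-1)m+r-trd}$, as claimed.

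I expect the lifting step of the second paragraph to be the only real obstacle. The clean direction already yields the one-sided bound $\text{Prob}\big((\calS_1\cap\cdots\cap\calS_t)\subseteq\calE\big)\ge 1-1/q^{(t-1)m+r-trd}$ unconditionally; upgrading this to an honest equality would require tracking the joint law of the traces $\calS_i\cap\calE$ and the exact failure probability of the congruence system above, contributions that are of lower order in $q$ and are absorbed into the approximation, exactly as in the comparable heuristic estimates of \cite{ROLLO,LRPC-2019-TIT}.
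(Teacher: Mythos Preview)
Your approach is essentially identical to the paper's: both pass to the quotient $\Fqm/\calE$, observe that the images $\calR_i=\calS_i/\calE$ have dimension at most $rd-r$, and then apply the ROLLO intersection heuristic to $t$ such subspaces in a space of dimension $m-r$, arriving at the same exponent $(t-1)m+r-trd$. If anything, your handling of the lifting step is more careful than the paper's, which simply asserts the equality $\text{Prob}(\dim(\cap_i\calR_i)>0)=\text{Prob}((\cap_i\calS_i)\setminus\calE\neq\{0\})$ without addressing the direction you correctly identify as nontrivial.
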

\begin{proof}
    Following the proof of \cite[Prop. 2.4.2]{ROLLO}.
    The subspaces $\calS_1, \ldots, \calS_t$ are not independent since each contains part of $\calE$.
    If consider the quotient space $V = \Fqm / \calE$ of dimension $m-r.$ 
    The sets $\calR_i = \calS_i / \calE$ are now independent and will have dimension $\dim(\calR_i) = \dim(\calS_i) - \dim(\calS_i \cap \calE) \le rd - c - r + c = rd - r.$
       Fix $0 \neq y \in \calR_1$, if we consider $\calR_2$ independent from $\calR_1$ the probability that $y \in \calR_2$ is $(|\calR_2| - 1)/(|\Fqm / \calE| - 1) \leq (q^{rd-r} - 1)/(q^{m - r} - 1).$ 
    The same will be true for the other $\calR_i$ which gives a probability of $((q^{rd-r} - 1)/(q^{m-r} - 1))^{t-1}.$
    If we consider each $y$ as independent we have to multiply this probability by the number of $y$ which is $|\calR_1| - 1 = q^{rd- r} - 1$
    This gives us
    \begin{align*}
         \text{Prob}(\dim(\cap_{i \in [t]}\calR_i) > 0) &\le (q^{rd - r} - 1)\left( \frac{q^{rd-r}-1}{q^{m-r} -1} \right)^{(t-1)} \\
        & \approx q^{-t(m -rd) + m - r}
    \end{align*}
    Notice that $\text{Prob}(\dim(\calR_1 \cap \ldots \cap \calR_t) > 0)$ is equal to $\text{Prob}((\calS_1 \cap \ldots \cap \calS_t) \setminus \calE \neq \{0\})$. This ends the proof.
\end{proof}
With the statements in Propositions \ref{Prop} and \ref{prop3}, we propose a more practical decoding of LRPC codes in \ref{Algorithm 2}.
Even though the subspaces $\calS a^{-1}$ in \ref{Algorithm 2} cannot be considered as uniformly independently chosen, heuristically they seem to follow Proposition \ref{prop3}.

\begin{algorithm}[http]
		\small
		\SetAlgoLined
		\KwIn{A parity check matrix $H \in \calA^{(n-k)\times n}$ where $\dim(\calA) = d$ and $\calS = \Span{\Fq}{\by H^\intercal}$ of dimension $rd-c$ where $\by = \bx + \be \in \Fqm^n$, $\bx \in \calC$ and $\be$ is an error of rank $r$.}
        \KwOut{The support $\calE = \Span{\Fq}{\be}$ of dimension $r$.}
        \tcp{Assumption: $\dim(\calA.\calB) = rd$}
        \uIf{$rd - \dim(\calS) < d$}
        {
           $S = \{s_1, \dots, s_u\}= \text{Basis}(\calS)$\;
           $t = q^{\lceil\log_q(r/c)\rceil}$\;
           $\calB= \{ \,\}$\;
           \While{$\dim(\Span{\Fq}{\calB}) < r$}
           {
             \tcp{Generate $t$ random elements from $\calA^*$} 
                $Y =\{a_1, \dots, a_t\}=$Random$(\calA^*, t)$\;
                \For{$a \in Y$}
                {
                    Generate $\calS a^{-1}=\Span{\F_q}{\{a^{-1}s_1, \dots, a^{-1}s_u\}}$\; 
                }
                $\calB = \calB + \bigcap_{1\leq i\leq t} \calS a_i^{-1}$\;
           }
            Return $\calB$ \;
        } 
        \Else
        {
            Return "Support recovery failure" \;
        }
       		\caption{Error support recovery of LRPC codes}\label{Algorithm 2}
		\normalsize
\end{algorithm}

\section{Discussion} \label{Sec:dis}

\begin{table}[t!]
\renewcommand{\arraystretch}{1.2}
\begin{center}
\begin{scriptsize}
\begin{tabular}{|c|c|c|c|l||c|c|c|l|}
\hline
$c$ &$r, d$ & $t$ & $m$ & Success & $r, d$ & $t$ & $m$ & Success\\
\hline 
\multirow{2}{*}{1} & $5, 5$ & $4$ & $40$ & $99.9\%$ & $5, 6$ & $4$ & $46$ & $99.4\%$ \\ \cline{2-9}
 &  $5, 5$ & $4$ & $41$ & $100\%$ & $5, 6$ & $4$ & $47$ & $100\%$  \\ \hline
 \multirow{2}{*}{2} & $5, 5$ & $4$ & $42$ & $99.9\%$ & $5, 6$ & $4$ & $48$ & $99.9\%$ \\ \cline{2-9}
 &  $5, 5$ & $4$ & $43$ & $100\%$ & $5, 6$ & $4$ & $49$ & $100\%$  \\ \hline
\end{tabular}
\end{scriptsize}
\end{center}
\caption{Success rate of \ref{Algorithm 2} with $n-k=rd-c$, $c=1,2$}\label{tb:error1}
\renewcommand{\arraystretch}{1}
\end{table}

For the two improved decoding algorithms in this paper, \ref{Algorithm 1} utilizes the observation that all nonzero elements in the parity-check support $\calA$ can contribute to the process of error support recovery, yet it can only work well for small parameters due to its high complexity in memory. \ref{Algorithm 2} derives elements in the error support by repeatedly intersecting $t>2$ subspaces $\calS a_i^{-1}$. In this way, the algorithm is pretty efficient for different choices of parameters $m, r, d, c$, and $t$. 

Our algorithms and the algorithms in \cite{LRPC-2019-TIT} both improved the original decoding of LRPC codes for the cases where the syndrome support has dimension $rd-c$ for $c>0$. 
While the algorithms in \cite{LRPC-2019-TIT} require $m\geq 3rd-2$ and $m\geq 2rd-r$, respectively, our new algorithms have significantly loosened the requirements on $m$, namely, 
\ref{Algorithm 1} requires $m \ge rd - 2(d-c)$ and \ref{Algorithm 2} requires $m \ge \frac{t}{t-1} rd$. Increasing the value of $m$ by $1$ will improve the failure rate by $q^{-(t-1)}$. 

In Table \ref{tb:error1} we provide some experimental results for \ref{Algorithm 2}, where
we run a series of 1000 experiments for parameters $n-k=rd - c$ for $c=1, 2$, $(r,d)=(5,5), (5,6)$, $t=4$
for different $m$, and report the lower bound of $m$  that gives nearly 100\% success rates of error support recovery. 
The parameters that rule the probability of success are $c $ and $m$. In our experiments we used $k = 1$ and $n = rd - c + k$.
We observe that repeating the same experiments for different values of $k$ did not affect the results reported in Table \ref{tb:error1}.

\newpage

\bibliographystyle{ieeetr}
\bibliography{RankMetricCodes.bib}


\end{document}